\newtheorem{theorem}{Theorem}
\newtheorem{example}{Example} 
\begin{document}
%
\title{Recovery of bilevel causal signals with finite rate of innovation  using positive  sampling kernels}


\author{\IEEEauthorblockN{Gayatri Ramesh}
\IEEEauthorblockA{Department of Mathematics\\
University of Central Florida\\
Orlando, FL 32816\\
Email: gayatriramesh@knights.ucf.edu}
\and
\IEEEauthorblockN{Elie Atallah  }
\IEEEauthorblockA{Department of Mathematics\\
University of Central Florida\\
Orlando, FL 32816\\
Email: elieatallah@knights.ucf.edu}
\and
\IEEEauthorblockN{Qiyu Sun}
\IEEEauthorblockA{Department of Mathematics\\
University of Central Florida\\
Orlando, FL 32816\\
Email: qiyu.sun@ucf.edu}
}


%


\maketitle

\begin{abstract}
Bilevel signal $x$ with maximal local rate  of innovation $R$ is a continuous-time signal that
takes only two values $0$ and $1$ and that there is at most one transition position in any time period of  $1/R$.
In this note, we introduce a  recovery method for
 bilevel causal signals $x$ with maximal local rate  of innovation $R$
 from their uniform samples $x*h(nT), n\ge 1$, where the sampling kernel $h$ is causal and positive on  $(0, T)$, and
 the sampling rate $\tau:=1/T$ is at (or above) the maximal local rate  of innovation $R$.
  We also discuss
   stability of the bilevel signal recovery procedure in the presence of bounded noises.
\end{abstract}


%
\IEEEpeerreviewmaketitle

\section{Introduction}

Let $T>0$ and $N$ be a nonnegative integer  or infinity, and denote by $\chi_E$ the indicator function on a set $E$.
In this note, we consider  {\em bilevel causal signals} 
\begin{equation}\label{bilevel.def} x(t):=\sum_{i=1}^{N} \chi_{[t_{2i-1}, t_{2i})}(t)\end{equation}
with unknown  transition values (positions)
$t_i, 1\le i\le 2N$, satisfying
\begin{equation}\label{innovationassump} t_i< t_{i+1},  \ 1\le i<2N;
\end{equation} 
and also
a uniform generalized sampling process
\begin{equation}\label{sampling.def} x(t)\longmapsto x*h(t)\longmapsto \{x*h(nT)\}_{n\ge 1}\end{equation}
with  sampling kernel $h$ being causal  and   uniform sampling taken  every $T$ seconds.
 For the bilevel  causal signal $x$ in (\ref{bilevel.def}), define its {\em maximal  local rate of innovation}
$R$ by  reciprocal of the maximal positive  number $\sigma_0$ such that there is at most one transition position $t_i, 1\le i\le 2N$,
in any time
period $[t, t+\sigma_0), t\ge 0$, that is,
\begin{equation}\label{innovativerate.def}
R=\sup_{1\le i<2N} \frac{1}{t_{i+1}-t_i}.\end{equation}
The concept of signals with finite rate of innovation was introduced by Vetterli, Marziliano and  Blu
\cite{vetterli02}. Examples of signals with finite rate of innovation
include streams of Diracs, piecewise polynomials, band-limited signals, and signals in a finitely-generated
shift-invariant space \cite{vetterli02}--\cite{sunaicm08}.
In  the past ten years, the paradigm for reconstructing 
signals with finite rate of innovation from their samples has been developed, see for instance \cite{vetterli02}, \cite{win02} and \cite{sunaicm08}--\cite{sun13} and references therein.

Precise identification of transition  positions is important to reach meaningful conclusions
in many applications.
Vetterli, Marziliano and  Blu show in \cite{vetterli02} that  a bilevel  signal $x$  defined in (\ref{bilevel.def}) can be reconstructed
from its samples (\ref{sampling.def}) when the sampling kernel $h$ is the box spline  $\chi_{[0,T)}$
(or the hat spline $ (T-|t|)\chi_{[-T, T)}(t)$) and  the sample rate $\tau:=1/T$ is at (or above) the maximal local rate of innovation $R$ of the signal $x$.
In this note, we show that  bilevel causal signals $x$ defined
in (\ref{bilevel.def}) are uniquely determined from their samples $x*h(nT), n\ge 1$, in (\ref{sampling.def})
if the  sampling kernel $h$ is  causal and positive on $(0,T)$,  and the sample rate $\tau$ is at (or above) the maximal local rate of innovation $R$, see Theorem \ref{uniqueness.tm}. Our numerical simulations indicate that    the bilevel signal recovery procedure from noisy samples $x*h(nT)+\epsilon_n, n\ge 1,$ is stable
 when  there are limited numbers of transition positions for the bilevel signal $x$.

\section{Recovery of bilevel causal signals}
\label{recovery.section}

In this section, we provide a necessary condition on the sampling kernel $h$
such that   bilevel signals $x$ in (\ref{bilevel.def})  are uniquely determined from their samples $\{x*h(nT)\}$
in (\ref{sampling.def}).
Also in this section, we propose an algorithm for the bilevel signal recovery.

\smallskip

The main  theorem of this note is as follows:


\begin{theorem}\label{uniqueness.tm} Let $T>0$ and set $\tau=1/T$.
If  $h$ is a causal sampling kernel with $h(t)>0$ on $(0, T)$, then any bilevel causal signal $x$
in (\ref{bilevel.def})  with maximal local rate of innovation $R$ being less than or equal to the sampling rate $\tau$ can be recovered
 from its samples $x*h(nT), n\ge 1$. 
\end{theorem}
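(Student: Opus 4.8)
The plan is to reduce the recovery to an inductive ``peeling'' of the transition positions, resting on two ingredients: a closed form for the samples in terms of the primitive of $h$, and the gap condition forced by $R\le\tau$. First I would rewrite the signal as a signed sum of one-sided steps, $x=\sum_{i=1}^{2N}(-1)^{i+1}\chi_{[t_i,\infty)}$, and set $H(v)=\int_0^v h(u)\,du$ (with $H(v)=0$ for $v\le 0$), the primitive of the \emph{known} kernel $h$. Since $h$ is causal, $\chi_{[t_i,\infty)}*h(t)=H(t-t_i)$, and since $x$ is causal the convolution $x*h(nT)=\int_0^{nT}x(nT-u)h(u)\,du$ is an integral over the bounded set $[0,nT]$, so only transitions with $t_i\le nT$ contribute. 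This yields the finite formula
\[
x*h(nT)=\sum_{t_i\le nT}(-1)^{i+1}H(nT-t_i),\qquad n\ge 1.
\]
The positivity hypothesis enters here: $H$ is continuous, $H(0)=0$, and $H$ is strictly increasing on $[0,T]$, so $H:[0,T]\to[0,H(T)]$ is a bijection and $H(v)>0$ for every $v\in(0,T]$.

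The heart of the argument is the gap condition: $R\le\tau$ means $t_{i+1}-t_i\ge T$ for all $i$, so each interval $[(n-1)T,nT)$ contains at most one transition. I would then recover $t_1,t_2,\dots$ one at a time. Suppose $t_1,\dots,t_{j-1}$ are known and let $P_{j-1}(n):=\sum_{i=1}^{j-1}(-1)^{i+1}H(nT-t_i)$ be the sample value predicted from them, which is computable. Writing $t_j\in[(k-1)T,kT)$, the gap condition gives $t_{j+1}\ge t_j+T\ge kT$, so no transition beyond $t_j$ contributes to $y_n$ for $n\le k$. For indices $n<k$ one has $t_j\ge(k-1)T\ge nT$, whence $H(nT-t_j)=0$ and $y_n=P_{j-1}(n)$ exactly; at $n=k$,
\[
y_k=P_{j-1}(k)+(-1)^{j+1}H(kT-t_j),
\]
and since $kT-t_j\in(0,T]$ the correction $H(kT-t_j)$ is strictly positive. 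Thus $k$ is identified as the first index past the previous transition's interval at which $y_n$ deviates from $P_{j-1}(n)$, and then $t_j$ is recovered by computing $H(kT-t_j)=(-1)^{j+1}\bigl(y_k-P_{j-1}(k)\bigr)\in(0,H(T)]$ and inverting the strictly increasing map $H$ on $[0,T]$. The base case $j=1$ is the same argument with $P_0\equiv 0$: here $k=m$ is simply the first nonzero sample and $t_1=mT-H^{-1}(y_m)$.

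The step I expect to be the main obstacle, and the place where both hypotheses are indispensable, is showing that the induction is well-posed, i.e. that the first deviation of $y_n$ from $P_{j-1}(n)$ occurs exactly at the interval containing $t_j$ and pins it down uniquely. The gap condition is what decouples the transitions, guaranteeing that at the critical index $k$ the only unknown contribution is the single term $(-1)^{j+1}H(kT-t_j)$, the next transition having been pushed out to $t_{j+1}\ge kT$. The positivity of $h$ on $(0,T)$ then plays a double role: it makes this contribution strictly nonzero, so the deviation is genuinely detectable with no accidental cancellation, and it makes $H$ invertible on $[0,T]$, so the fractional position within the interval is uniquely determined. I would close by observing that the scan continues for all $n\ge 1$, recovering every transition even when $N=\infty$, which establishes that $x$ is uniquely determined by its samples $\{x*h(nT)\}_{n\ge1}$.
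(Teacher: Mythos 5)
Your proof is correct and follows essentially the same strategy as the paper's: define the primitive $H$ of the kernel, use the positivity of $h$ on $(0,T)$ to make $H$ strictly increasing and hence invertible on $[0,T]$, use the gap condition $R\le\tau$ to guarantee at most one transition per sampling interval, and recover the transitions inductively by subtracting the computable contribution of the already-known transitions and inverting $H$ on the strictly positive residual. Your signed-step decomposition $x=\sum_{i=1}^{2N}(-1)^{i+1}\chi_{[t_i,\infty)}$, which yields $y_n=\sum_{t_i\le nT}(-1)^{i+1}H(nT-t_i)$, is just a compact repackaging of the paper's four-case induction (its Cases 1--3 correspond to your parity bookkeeping), so the two arguments are the same in substance.
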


\smallskip

\begin{proof}
Let
\begin{equation} H(t)=\int_0^t h(s) ds,\ 0\le t\le T.
\end{equation}
Then $H(0)=0$ and  $H$ is a strictly increasing function on $[0,T)$ as $h$ is strictly positive on $(0, T)$. Denote its inverse function
on $[0, T]$
by $H^{-1}:[0, H(T)]\longmapsto [0, T]$.

Let $x$ be a bilevel causal signal in (\ref{bilevel.def}) with transition positions $t_i, 1\le i\le 2N$, satisfying
 (\ref{innovationassump}).
Then its first sample $y_1= x*h(T)$ is given by
\begin{eqnarray*}
y_1 & = & \int_0^\infty x(t) h\big(T-t\big) dt=\int_0^{T} x(t) h\big(T-t\big) dt\\
&= & \int_{0}^{T} \chi_{[t_1, t_2)}(t) h\big(T-t\big) dt= H\big(\max\{T-t_1, 0\big\}\big),
\end{eqnarray*}
where the first two equalities hold by the causality of the signal $x$ and the sampling kernel $h$,
and the fourth equality follows from (\ref{bilevel.def}) and the observation
that
$$t_i\ge t_2=(t_2-t_1)+t_1\ge 1/R+0 \ge 1/\tau=T, \ i\ge 2$$ by  (\ref{innovationassump}), (\ref{innovativerate.def})
and the assumption that $R\le \tau$.
Recall that
 $H$ is strictly increasing on $[0, T)$. Then
 there exists a transition position in the time range $[0,T)$
 if and only if  $y_1=x*h(T)>0$. Moreover, if it exists,  it is given  by
 \begin{equation} t_1=T- H^{-1}(y_1).\end{equation}
 Thus  for a bilevel causal signal, we may determine from its first sample $x*h(T)$
 the (non-)existence of its transition position in the time period $[0, T)$ and
  further its transition  position in that time period if there is one.

 Inductively, we  assume that all transition positions of the bilevel signal $x$ in the time range $[0, nT)$
 have been determined from its
 samples $y_k=x*h(kT), 1\le k\le n$. We examine four cases to determine its transition position in the time period $[nT, (n+1)T)$ from the
 sample $y_{n+1}=x*h((n+1)T)$.

 {\bf Case 1}: \ There is no transition position in $[0, nT)$.

 In this case, following the above argument to determine transition positions in the time range $[0, T)$, we have that
  there exists a transition position in $[nT, (n+1)T)$ if and only if $y_{n+1}>0$. If there is, the transition position
 is the first transition position $t_1$ of the bilevel causal signal $x$,  and
 \begin{equation} t_1=(n+1)T- H^{-1}(y_{n+1}).\end{equation}

 {\bf Case 2}: \ The last transition position in  $[0, nT)$ is $t_{2i_0-1}$ for some $i_0\ge 1$.

 In this case,
 $t_{2i_0}\ge nT$ and $t_i\ge (n+1)T$ for all $i>2i_0$.
 Thus  
 \begin{eqnarray*}
 y_{n+1} & = & 
 \int_0^{(n+1)T} x(t) h\big((n+1)T-t\big) dt\\
  & = & \int_0^{(n+1)T} h\big((n+1)T-t\big)\nonumber\\
  & &\quad \times
   \Big(\sum_{i=1}^{i_0-1} \chi_{[t_{2i-1}, t_{2i})}(t) +\chi_{[t_{2i_0-1}, (n+1)T)}(t)\Big)
    dt \nonumber\\
    & &   -
   \int_{nT}^{(n+1)T}
    h\big((n+1)T-t\big)
   \\
   & &\qquad
   \times     \chi_{[\min(t_{2i_0}, (n+1)T), (n+1)T) }(t) dt.
\end{eqnarray*}
Hence there exists a transition position $t_{2i_0}$ in the time range $[nT, (n+1)T)$ if and only if
 \begin{eqnarray}\label{yn+1.case2}
   & & \tilde  y_{n+1} :=  -y_{n+1}+ \int_0^{(n+1)T} h\big((n+1)T-t\big)\nonumber\\
  & & \quad  \times
   \Big(\sum_{i=1}^{i_0-1} \chi_{[t_{2i-1}, t_{2i})}(t) +\chi_{[t_{2i_0-1}, (n+1)T)}(t)\Big)
    dt\quad \end{eqnarray}
 is positive. Moreover if $\tilde y_{n+1}>0$,
 the transition position $t_{2i_0}$ in the time range $[nT, (n+1)T)$ is determined by
 \begin{equation}\label{ti.case2}
 t_{2i_0}= (n+1)T- H^{-1}(\tilde y_{k+1}).
 \end{equation}

  {\bf Case 3}: \ The last transition position in  $[0, nT)$ is $t_{2i_0}$ for some $1\le i_0<N$.

In this case,  the $(n+1)$-th sample $y_{n+1}=x*h((n+1)T)$ is given by
 \begin{eqnarray}
 y_{n+1} & = & \int_0^{nT}\Big (\sum_{i=1}^{i_0}
 \chi_{[t_{2i-1}, t_{2i})(t)} \Big) h\big((n+1)T-t\big) dt\nonumber\\
 & & +\int_{\min(t_{2i_0+1}, (n+1)T)}^{(n+1)T} h\big((n+1)T-t\big) dt.
\end{eqnarray}
 Thus there exists a transition value $t_{2i_0+1}\in [nT, (n+1)T)$ if and only if
 \begin{equation}
 \label{yn+1.case3}
 \tilde y_{n+1}:=y_{n+1}-\int_0^{nT}\Big (\sum_{i=1}^{i_0}
 \chi_{[t_{2i-1}, t_{2i})(t)} \Big) h\big((n+1)T-t\big) dt\end{equation} is positive.
 Also we see that if $\tilde y_{n+1}>0$, then  the transition value $t_{2i_0+1}$ can be obtained by
 \begin{equation}\label{ti.case3}
 t_{2i_0+1}= (n+1)T- H^{-1}(\tilde y_{n+1}).
 \end{equation}

  {\bf Case 4}: \ The last transition position in $[0, nT)$ is $t_{2N}$.

  In this case, all transition positions of the bilevel signal $x$ have been  recovered already. Hence the bilevel signal $x$ is fully recovered.

This completes our inductive proof.
\end{proof}

\smallskip

From the above argument of Theorem \ref{uniqueness.tm}, we can use the following algorithm to recover a bilevel causal signal $x$
in (\ref{bilevel.def})  from its samples $x*h(nT), 1\le n\le K$,
 where $K>t_{2N}\tau$:

{\bf Bilevel Signal Recovery Algorithm}:
\begin{quote}

\begin{itemize}
\item[{\em Step 1}:] If all samples $y_n=x*h(nT), 1\le n\le K$, are zero, then set $x=0$ and stop;
 else
find the first nonzero sample, say $y_{n_0}>0$,  the first transition position of the bilevel signal $x$ is located at
$t_1:= n_0-H^{-1}(y_{n_0})$, and set $n=n_0$.

\item [{\em  Step 2}:] Do Step 2a if
 the last transition position in the time range $[0, nT)$ is $t_{2i_0-1}$ for some $i_0\ge 1$; do
Step 2b elseif  the last transition position in the time range $[0, nT)$ is $t_{2i_0}$ for some $1\le i_0<N$;
and do
Step  4 else.

\begin{itemize}

\item Step 2a:  Define $t_{2i_0}$ as in (\ref{ti.case2}) if  $\tilde y_{n+1}$   in (\ref{yn+1.case2})
is positive, else do Step 3.

  \item Step 2b:  Define $t_{2i_0+1}$ as in (\ref{ti.case3}) if
   $\tilde y_{n+1}$ in (\ref{yn+1.case3}) is positive, else do Step 3.

  \end{itemize}

  \item[{\em  Step 3}:] Set $n=n+1$. Do Step 2 if $n<K$, and Step 4 if $n=K$.

\item[{\em Step 4}:]
 Stop as all transition positions $t_i, 1\le i\le 2N$, of the bilevel signal $x$ are recovered.
 \end{itemize}
\end{quote}

\smallskip
We finish this section with a  remark that  the requirement $R\le \tau$ in Theorem \ref{uniqueness.tm}
 can be relaxed to
the following: There is at most one transition position $t_i, 1\le i\le 2N$, in each sampling range $[nT, (n+1)T), n \ge 1$.

\section{Stable recovery of bilevel causal signals}

In this section, we consider
the maximal sampling error $\sup_{n} |x*h(nT)-\tilde x*h(nT)|$
of two bilevel signals $x$ and $\tilde x$ when maximal error of
their transition positions are small.
We then present some numerical simulations on recovery of a bilevel signal
$x$ in (\ref{bilevel.def}) from its noisy samples $\{x*h(nT)+\epsilon_n\}$ in (\ref{sampling.def}),
where $\epsilon_n, n\ge 1$, are bounded noises of low levels.

\smallskip

First we notice that sampling procedure from bilevel signals $x$ to their samples $\{x*h(nT)\}$ are stable in bounded norm.

%

\begin{theorem}\label{stability.tm}
Let $T>0$,  $h$ be a  bounded filter supported in $[0, MT)$,  $x(t)=\sum_{i=1}^N \chi_{[t_{2i-1}, t_{2i})}(t)$ be a bilevel causal signal
with maximal local innovation rate $R\le \tau:=1/T$, and
$\tilde x(t)=\sum_{i=1}^N \chi_{[t_{2i-1}+\delta_{2i-1}, t_{2i}+\delta_{2i})}$
be a perturbation of the bilevel signal $x$ with perturbed  transition positions $\{t_i+\delta_i\}_{i=1}^{2N}$  satisfying
$$\delta:=\sup_{1\le i\le 2N} |\tilde t_i-t_i|<\frac{1}{2R}.$$
Then the sample errors between $x*h(nT)$ and $\tilde x*h(nT), n\ge 1$, are dominated by
$(\lfloor MRT\rfloor +2)  \|h\|_\infty \delta$, i.e.,
\begin{equation}\label{error.prop.eq1}
|x*h(nT)-\tilde x*h(nT)|\le
\big(\lfloor MRT\rfloor +2\big)  \|h\|_\infty \delta,\  n\ge 1,
\end{equation}
where $\|h\|_\infty$ is the $L^\infty$ norm of the sampling kernel $h$.
\end{theorem}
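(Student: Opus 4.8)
The plan is to estimate the sample error directly as an integral and then reduce it to a geometric counting problem governed by the innovation rate. First I would write, for each $n\ge 1$,
\begin{equation*}
x*h(nT)-\tilde x*h(nT)=\int_{nT-MT}^{nT}\big(x(t)-\tilde x(t)\big)\,h(nT-t)\,dt,
\end{equation*}
where the range is cut to $(nT-MT,nT]$ because $h$ is supported in $[0,MT)$ and $x,\tilde x$ are causal. Taking absolute values and bounding $|h(nT-t)|\le\|h\|_\infty$, the task reduces to controlling the measure of the support of $x-\tilde x$ inside the window $(nT-MT,nT]$.

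The next step is to describe that support. Since $\delta<1/(2R)$ while consecutive transition positions of $x$ are at least $1/R$ apart, the perturbed positions $\tilde t_i=t_i+\delta_i$ remain strictly increasing, so $\tilde x$ is again a genuine bilevel signal with $i$-th interval $[t_{2i-1}+\delta_{2i-1},t_{2i}+\delta_{2i})$. Consequently $x-\tilde x$ takes values in $\{-1,0,1\}$ and is supported on the (pairwise disjoint) intervals $I_j$ lying between $t_j$ and $t_j+\delta_j$, one for each transition position, with $|I_j|=|\delta_j|\le\delta$. Hence
\begin{equation*}
|x*h(nT)-\tilde x*h(nT)|\le\|h\|_\infty\cdot\Big|\Big(\bigcup_{j=1}^{2N}I_j\Big)\cap(nT-MT,nT]\Big|,
\end{equation*}
where $|\cdot|$ denotes Lebesgue measure.

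The heart of the argument, and the step I expect to be the main obstacle, is the counting estimate: how many of the $I_j$ can meet the window of length $MT$. Because each $I_j\subset[t_j-\delta,t_j+\delta]$, an interval $I_j$ meeting $(nT-MT,nT]$ forces $t_j$ into the enlarged window $(nT-MT-\delta,nT+\delta)$ of length $MT+2\delta$. Since the transition positions are spaced at least $1/R$ apart, any interval of length $L$ contains at most $\lfloor LR\rfloor+1$ of them; applying this with $L=MT+2\delta$ and using $2\delta R<1$ gives
\begin{equation*}
\lfloor(MT+2\delta)R\rfloor+1=\lfloor MRT+2\delta R\rfloor+1\le\lfloor MRT\rfloor+2.
\end{equation*}
Multiplying this count by the per-interval length bound $\delta$ and by $\|h\|_\infty$ yields (\ref{error.prop.eq1}). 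The only delicate points are verifying that the hypothesis $\delta<1/(2R)$ keeps the transition order intact, so that the support of $x-\tilde x$ really is a disjoint union of short intervals indexed by the transitions, and carrying out the floor arithmetic so that the $2\delta$ enlargement of the window costs only one extra interval in the final count.
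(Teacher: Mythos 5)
Your proposal is correct and follows essentially the same route as the paper's proof: bound $|h|$ by $\|h\|_\infty$, observe that $x-\tilde x$ is supported on disjoint intervals of length at most $\delta$ around the transition positions, and count how many such intervals can meet the window of length $MT$ via the spacing bound, giving $\lfloor (MT+2\delta)R\rfloor+1\le \lfloor MRT\rfloor+2$. Your explicit verification that $\delta<1/(2R)$ preserves the ordering of the perturbed transitions is a point the paper uses only implicitly, but the argument is the same.
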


\begin{proof}
By the assumption on maximal local innovation rate $R$
of the bilevel signal $x$ and  the  maximal transition position  perturbation $\delta$ between bilevel signals $x$ and $\tilde x$, we have that
$$ |x(t)-\tilde x(t)|=\sum_{i=1}^{2N} \chi_{t_i+[\min(\delta_i,0), \max(\delta_i,0))}(t).$$ 
This together with the support assumption for the sampling kernel $h$
gives that
\begin{eqnarray*}
& & |x*h(nT)-\tilde x*h(nT)|\nonumber\\
 & = &
\Big|\int_0^{nT} (x(t)-\tilde x(t)) h(kT-t) dt\Big|\nonumber\\
& \le & \|h\|_\infty \int_{(n-M)T}^{nT}
\sum_{i=1}^{2N} \chi_{t_i+[\min(\delta_i,0), \max(\delta_i,0))}(t)
 dt.
\end{eqnarray*}
Therefore
\begin{eqnarray*}
& & |x*h(nT)-\tilde x*h(nT)|\nonumber\\
& \le &
\delta \|h\|_\infty
\#\{ t_i: t_i\in [(n-M)T-\delta, nT+\delta)\}
\nonumber\\
& \le  &
\delta \|h\|_\infty
(\lfloor (MT+2\delta)/(1/R)\rfloor +1)\nonumber \\
& \le  &
\delta \|h\|_\infty
(\lfloor M RT\rfloor +2),
\end{eqnarray*}
where the first inequality holds
as $t_i\in [(n-M)T-\delta, nT+\delta)$ if $t_i+[\min(\delta_i,0), \max(\delta_i,0))$
and $[(n-M)T, nT)$ have nonempty intersection, the second inequality is true as $t_{i+1}-t_i\ge 1/R$ for all $1\le i<2N$, and
the last inequality follows from the assumptions that $\delta<1/(2R)$ and $R\le \tau$.
This proves the sampling error estimate
(\ref{error.prop.eq1}) between the bilevel causal signals $x$ and $\tilde x$.
\end{proof}

\smallskip

Now we  consider the  corresponding nonlinear inverse problem how to recover a bilevel signal
$x$  from its noisy samples $\{x*h(nT)+\epsilon_n\}$ in (\ref{sampling.def}),
where $\epsilon_n, n\ge 1$, are bounded  noises.
Let us start by looking at two examples.

\smallskip

\begin{example}\label{example1}
Take $x_1(t)=\sum_{i=1}^\infty \chi_{[2i-1, 2i)}(t)$ as the original bilevel signal and $h_1(t)=\chi_{[0,2)}(t)$ as the sampling kernel.
For sufficiently small $\epsilon>0$, define $x_{1, \epsilon}
=\sum_{i=1}^\infty \chi_{[(1+\epsilon)(2i-1), 2 (1+\epsilon) i )}(t)$.
Then  for every $i\ge 1$, the $i$-th transition positions of bilevel signals $x_1$ and $x_{1, \epsilon}$ are $i$ and $i(1+\epsilon)$ respectively (hence their difference is $i\epsilon$ that could be arbitrary large for sufficiently large $i$),
but on the other hand,
maximal sampling errors for those two bilevel signals $x_1$ and $x_{1, \epsilon}$ are bounded by $\epsilon$,
$$|x_{1, \epsilon}*h_1(n)-x_1*h_1(n)|=|x_{1, \epsilon}*h_1(n)-1|\le \epsilon,\  n\ge 1.$$
This leads to  instability of the recovery procedure from
 samples $\{x_1*h_1(n)\}$  to the bilevel signal $x_1$ in the presence of bounded noises.
%
%
\end{example}

\begin{example}\label{example2}
Take $x_1$ and $h_1$ in Example \ref{example1}
as the original bilevel signal and  the sampling kernel respectively. Define
$x_{2, \epsilon}=\sum_{i=1}^\infty \chi_{[2i-1+\epsilon, 2i+\epsilon)}(t) $
for sufficiently small $\epsilon>0$.
Then for every $i\ge 1$ the difference between  $i$-th transition positions of bilevel signals $x_1$ and $x_{2, \epsilon}$ is  always $\epsilon$,
and  there is no difference between their $n$-th samples except for $n=1$.
This  suggests that the recovery procedure from
 samples $\{x_1*h_1(n)\}$  to the bilevel signal $x_1$
 is not locally-behaved and the
 reconstruction error on transition positions could disseminate.
\end{example}

\smallskip
From the above two examples, we see that the nonlinear recovery procedure from
samples $\{x*h(n)\}$ to  bilevel signals $x$ is {\em unstable} in the presence of bounded noises and
that it is  {\em  globally-behaved} in general.
  In this note, we present some initial numerical simulations with
small numbers of transition positions, sampling rate over  maximal local rate of innovation
and very low levels of noise.
Detailed noise performance analysis and stable recovery in the presence of other types of noises will be discussed in the
 coming paper.

Take  a sampling kernel
$h_0(t)=\frac{t+1}{2}\chi_{[0,1)}(t)+ (2t-1)\chi_{[1, 2)}$, and
 a bilevel signal
 \begin{eqnarray} \label{testsignal}
 x_0(t) &= & \chi_{[ 0.3791,    1.9885)}(t) +\chi_{[3.1306,    4.3440)}(t)\nonumber \\
 & & +\chi_{[5.7552,    7.1820)}(t)+
\chi_{[ 8.7423,  10.1052)}(t)\nonumber\\
 && +\chi_{[11.4200,   12.6884)}(t)\end{eqnarray}
containing 10 transition positions, see Figure \ref{bilevelsignal.fig}.
   \begin{figure}[hbt]
\centering
\begin{tabular}{cc}
  \includegraphics[width=43mm, height=32mm]{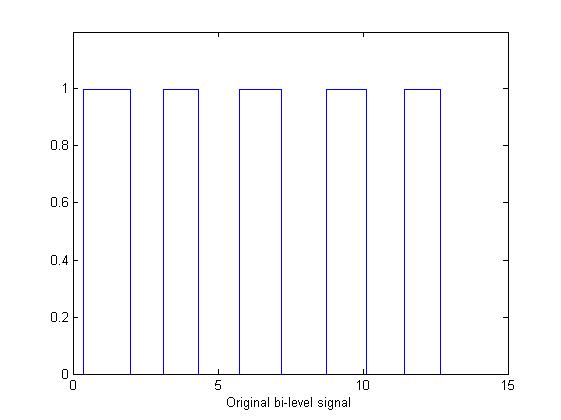} &
  \includegraphics[width=43mm, height=32mm]{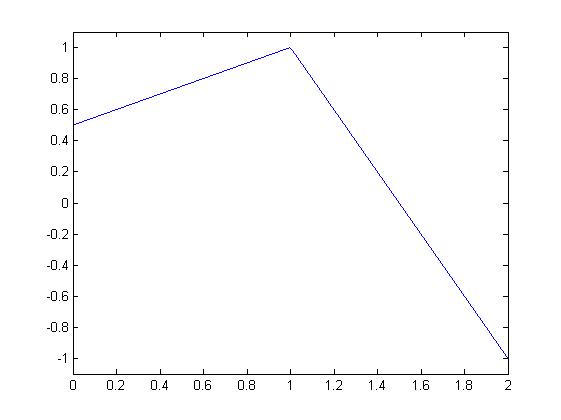}
   \end{tabular}
\caption{\small Bi-level signal $x_0$ (left) and sampling kernel $h_0$ (right)
}
 \label{bilevelsignal.fig}
\end{figure}
Here  transition positions $t_i^0, 1\le i\le 10$, of the bilevel signal $x_0$ are randomly selected so that
$t_i^0-t_{i-1}^0\in [1.1, 1.9], 2\le i\le 10$. The bilevel signal $x_0$ in (\ref{testsignal}) has $0.8756$ as its maximal local rate of innovation.
%
%
We  sample the convolution $x_0*h_1$ between $x_0$ and $h_1$ every second, which generates the
sampling vector $Y_0=(x_0*h(1), \ldots, x_0*h(14))$,
   and then we add bounded random noise to the sampling vector
$$Y_\delta=Y_0+\delta (\epsilon_1, \ldots, \epsilon_{14})$$
with  noise level $\delta\ge 0$, where $\epsilon_i\in [-1, 1], 1\le i\le 14$, are random noises.
We apply the bilevel signal recovery algorithm in Section \ref{recovery.section} with some technical adjustment when
the reconstructed transition position is approximately located at some sampling positions,
and denote  the first ten transition positions of the reconstructed bilevel signal  $x_\delta$  by $t_{1, \delta}, \ldots, t_{10, \delta}$.
Define  maximal error of first ten transition positions by
$$P(\delta)=\max_{1\le i\le 10} |t_{i,\delta}-t_i^0|.$$
We perform the bilevel signal recovery algorithm  in Section \ref{recovery.section} 50 times for every noise level $\delta\in [0, 0.03]$.
The
 maximal value  of $P(\delta)$ after performing the algorithm 50 times is plotted in Figure \ref{innovationerror.fig} with solid line,
 while the average value of $P(\delta)$ plotted with dashed line.
Notice that    $\max_{1\le n\le 14} |x_0*h_1(n)|=0.9796$.
 Thus the maximal error $P(\delta)$ of transition positions
is less than $10\%$  when   the noise level $\epsilon=\max_{n\ge 1} |\epsilon_n|$
is  at (or below) $2\%$ of the maximal sample value
$\max_{n\ge 1} |x_0*h_0(nT)|$, while some
transition positions could not be recovered  approximately when the noise level is above $3\%$.
This   indicates that our  algorithm to recover
    the bilevel signal  from its noisy samples  is ``reliable"
    only for low level of bounded noises.
    We doubt that it is because of  the instability of the  nonlinear recovery procedure
 in the presence of bounded noises. We will do the detailed noise performance analysis in the coming paper.

  \begin{figure}[hbt]
\centering
\begin{tabular}{c}
   \includegraphics[width=88mm, height=48mm]{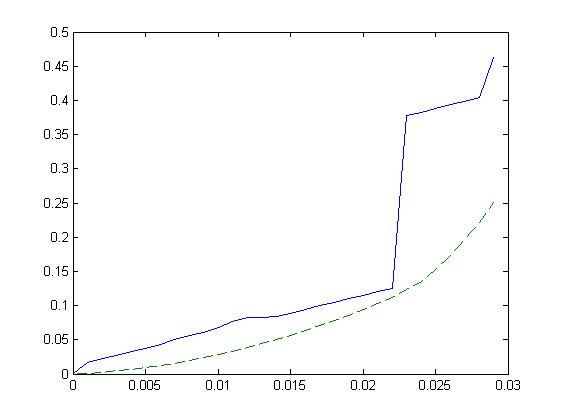}
   \end{tabular}
\caption{\small Maximal transition position error
}
 \label{innovationerror.fig}
\end{figure}

\section{Conclusion}
In this note, we show that  bilevel causal signals $x$
could be reconstructed from  their samples $x*h(nT), n\ge 1$,
if the  sampling kernel $h$ is  causal and positive on $(0,T)$  and
if the sample rate  is at (or above) the maximal local rate of innovation of the bilevel signal $x$.
We also propose a stable bilevel signal recovery algorithm in the presence of bounded noise
if the number of transition positions of bilevel signals is not large. We remark that the bilevel signal recovery  algorithm
proposed in this note is applicable
when uniform sampling $x*h$ every $T$ second replaced by
nonuniform sampling $\{x*h(s_n)\}$ with sampling density  $\sup_{n\ge 1} |s_{n+1}-s_n|\le T$,
and bilevel causal signal $x=\sum_{i=1}^N \chi_{[t_{2i-1}, t_{2i})}(t)$ with maximal local rate of innovation  $R\le 1/T$ replaced by
box causal signals $x=\sum_{i=1}^N c_i \chi_{[t_{2i-1}, t_{2i})}(t)$ with maximal local rate of innovation  $R\le 1/(2T)$, where
for every $1\le i\le N$,  $c_i$ is height of the box located on the time period $[t_{2i-1}, t_{2i})$.


\section*{Acknowledgment}

The authors would like to thank Professor Ram Mohapatra
 for his help  in the preparation of this note. This work is  supported in part by  the National Science Foundation
(DMS-1109063).



%

\end{document}